\newtheorem{theorem}{Theorem}[section]
\newtheorem{proof*}{Proof }
\newtheorem{definition}[theorem]{Definition}
\newcommand{\bean}{\begin{eqnarray*}}
\newcommand{\eean}{\end{eqnarray*}}
\newcommand{\ba}{\begin{array}}
\newcommand{\ea}{\end{array}}
\newcommand{\be}{\begin{equation}}
\newcommand{\ee}{\end{equation}}
\newcommand{\bea}{\begin{eqnarray}}
\newcommand{\eea}{\end{eqnarray}}
\newcommand{\pa}{\partial}
\newcommand{\no}{\nonumber}
\newcommand{\bk}{\backslash}
\newcommand{\qd}{\quad}
\begin{document}

\title
{Totally Non-negative Pfaffian for Solitons in BKP Equation}
\author{
 Jen-Hsu Chang \\Graduate School of National Defense, \\
 National Defense University,\\
Tau-Yuan City, 335009, Taiwan}

\date{}

\maketitle
\begin{abstract}
The BKP equation is obtained from the reduction of B-type in the KP
hierarchy under the orthogonal type transformation group for the KP equation. The skew Schur’s Q functions can be used to construct the $\tau$-functions of solitons in the BKP equation. Then the totally non-negative Pfaffian can be defined via the skew Schur’s Q functions to obtain non-singular line-solitons  solution in the BKP equation. The totally non-negative Pfaffians are investigated. The line solitons interact to form web-like structure in the near field region and their resonances appearing in soliton graph could be investigated
by the totally non-negative Pfaffians.
\end{abstract}
MSC: 35Q51, 35R02, 37K40 \\
Keywords: skew Schur’s Q functions, Totally non-negative Pfaffian, Web Solitons, Resonance\\

\section{Introduction} 
\indent The BKP equation \cite{da, dj} or the 2+1 Sawada-Kotera equation \cite{ji} 
\be 
(9 \phi_t-5\phi _{xxy}+\phi_{xxxxx}-15 \phi_x \phi_y+15 \phi_{x}\phi_{xxx}+15\phi_x^3)_x-5 \phi_{yy}=0 \label{bkp}
\ee
is obtained from the reduction of B-type in the KP hierarchy under the orthogonal type transformation group for the KP equation.  It can also be obtained by the Kupershmidt reduction \cite{ku} or the hierarchy defined
on integrable 2D Schrodinger operators \cite{kr}. The web-solitons solutions of the BKP equation (\ref{bkp})  are constructed by the vertex operators and the Clifford algebra of free fermions \cite{dj} ($\tau$-function theory). The Hirota bilinear form  \cite{hi, ma, rh, yq} of BKP equation is obtained from the Clifford group acting on the vacuum.  The  Pfaffian structure for web solitons is a special solution ansatz for the Hirota bilinear form.  In \cite{kv, nim, ni, or}, the rational solutions are established using the $\tau$-functions expressed as  the linear combination of the Schur Q-functions (polynomials) over the Pfaffian coefficients defined on  partitions with distinct parts. On the other hand, the $\tau$-function of the BKP hierarchy is known to  be obtained as the partition function or as the matrix integrals \cite{hu, os} so that  one  could  study the Pfaffian point process \cite{wa}. The non-commutative case of BKP equation and its $\tau$-function are  investigated in \cite{de}. 

On the other hand, the resonant interaction plays a fundamental  role in multi-dimensional wave phenomenon. The resonances of web solitons of KP-(II) equation
\be    \pa_x (-4 u_t+u_{xxx}+6uu_x)+ 3u_{yy}=0  \label{kp} \ee
has attracted much attractions using the totally non-negative Grassmannians \cite{bc, ko1, ko3}.  For the KP-(II) equation case, the  $\tau$-function  is described by the Wronskian form obtained from the Hirota bilinear form \cite{hi}. Using the Cauchy-Bitnet formula, the Wroskian can be expressed as the linear combination of Plucker coordinates with dominant phase functions. To obtain  non-singular multi-line solitons solutions, the Plucker coordinates must be totally non-negative. A similar consideration also yields the totally non-negative Pfaffian since the $\tau$-function of BKP is also a linear combination of coefficients of Pfaffian with dominant phase functions. The resonance of $\tau$-function of Pfaffian structure is investigated in DKP theory \cite{km} using the A-soliton and D-soliton. Inspired by their results, one would study the resonance theory in BKP equation via the totally non-negative Pfaffian. 

The paper is organized as follows: In Section 2,  one derives the Hirota equation using the Clifford algebra as well as studies  the $\tau$-functions on skew-Schur's Q functions. In Section 3, one investigates the totally non-negative Pfaffians in block form. In Section 4, one investigates the singular totally non-negative Pfaffians in  theoretical framework  especially useful while dealing with the totally non-negative Grassmannian. Finally, one finishes the article with some conclusive remarks in Section 5.
   
\section{Solitons in Totally Non-negative Pfaffian } 

In this section, one  considers the $\tau$-function over Schur's Q-function  \cite{ni, or} in the BKP equation (\ref{bkp}) and generalizes it to  the skew Schur's Q-function. To obtain non-singular web solitons, one also introduces  the totally non-negative Pfaffian
defined in \cite{yt}. \\
\indent We start with Clifford algebra for BKP equation\cite{da, dj, lu} and  consider neutral fermions $ \{ \phi_n, n \in Z\}$, obeying the following canonical anti-communication relations 
\be 
 [\phi_m, \phi_n]_+=\phi_m \phi_n + \phi_n  \phi_m=(-)^m \delta_{m, -n}. \label{at} 
\ee
In particular, $ \phi_0^2=1/2$. \\
\indent There are the right and the left vacuum vectors $|0>$ and $<0|$ respectively, having the properties
\[ \phi_m|0>=0, (m<0), \qquad <0|\phi_m  =0 , ( m> 0),\]
and 
\[ \sqrt{2} \phi_0 =| 1 >, \qd  \sqrt{2} \phi_0 | 1 >= 0 | >,  \qd < 0 | \sqrt{2} \phi_0= < 1 |, \qd   < 1 | \sqrt{2} \phi_0=< 0 | . \]
We have a right and left Fock spaces spanned,  respectively, by the right and the left vacuum vectors 
\bea 
&& \phi_{n_1} \phi_{n_2} \phi_{n_3} \cdots \phi_{n_k} |0> , \no \\
&&<0|  \phi_{-n_1} \phi_{-n_2} \phi_{-n_3} \cdots \phi_{-n_k}, \label{fo}
\eea
where $k=1, 2, 3, \cdots $ and we assume 
\[ n_1> n_2>n_3 \cdots > n_k \geq 0 \]
due to the anti-communication relations $(\ref {at})$. The vacuum expectation values of quadratic elements are
given by
\[ < 0| \phi_i \phi_j | 0>=<  \phi_i \phi_j >=\left \{  \begin{array}{ll} 
(-1)^i \delta_{i, -j},  i <0,  \\
\frac{1}{2} \delta_{j, 0} , i=0, \\
0, i>0 .
\end{array}  \right. \]
Then one introduces the following Hamiltonian operators
labeled by odd numbers $ n \in 2Z+1$:
\be H_n=\frac{1}{2} \sum_{k= - \infty}^{\infty} (-1)^{k+1} \phi_k \phi_{-k-n}. \label{ha}
\ee
One can check that $H_n$ obey the following Heisenberg algebra relations
\be 
[ H_n, H_m]=H_n H_m-H_m H_n= \frac{n}{2} \delta_{n+m, 0}.
\ee
The anti-communication relations $(\ref {at})$ imply 
\be 
[ H_n, \phi_m]= \phi_{m-n}. \label{mi}
\ee
We also notice that by the definition of vacuum 
\[H_n |0 >=H_n \phi_0 |0 >=0. \]
For the independent time variables of BKP equation,  one sets 
\[ H(t_1, t_3, t_5, \cdots,   )= \sum_{n= 1, 3, 5, \cdots }^{\infty} t_n H_n .\]
It's suitable to introduce the following fermionic field, which depends on a complex parameter $p$ :
\[ \phi (p) = \sum_{k= -\infty }^{\infty} p^k \phi_k.\]
For fermionic fields, the anti-communication relations $(\ref {at})$ reads as 
\[ [\phi (p), \phi (p')]_+= \sum_{k= -\infty }^{\infty} (\frac{-p}{p'})^k = \delta ( \frac{-p}{p'}) ,\]
and when $ |p| \neq |p'|$, one has 
\[ < 0| \phi (p) \phi (p') |0>=< \phi (p) \phi (p')>= \frac{(p-p')}{2(p+p')}. \] 
By the Wick theorem, one gets
\bea
< \phi (p_1) \phi (p_2) \phi (p_3)  \cdots  \phi (p_N) > &=& \left \{  \begin{array}{ll} 
Pf (<\phi (p_i) \phi (p_j)>,  \mbox {N even}\\
0, \mbox {N odd } 
\end{array}  \right.   \no \\
&=& \left \{  \begin{array}{ll} 
2^{-N/2} \prod_{i<j} \frac{p_i-p_j}{p_i+p_j},  \mbox {N even}\\
0, \mbox {N odd } 
\end{array}  \right.  \label{wi}
\eea
where $Pf $  means the Pfaffian. 
The relation $(\ref {mi})$ yields 
\[ [ H_n, \phi (p)]= z^n \phi (p),\]
which in turn results in
\[ \phi(p)(t)=e^{ H(t) } \phi(p)e^{ -H(t) }= \phi(p)e^{ \xi (t,p) }, \quad \xi (p,t)= \sum_{k=1, 3, 5 , \cdots }^{\infty} t_k p^k. \]
To introduce the $\tau$-function, we define the Clifford group 
\be 
G= \{g | g^{-1} \quad \mbox {exists and } g \phi_k g^{-1}=\sum_{j \in Z } a_{jk} \phi_j \}. \label{gu}
\ee
A typical element of $G$ is the exponential 
\[ g=e^{\frac{1}{2} \sum_{r , s \geq 0 } \alpha_{rs} \phi_r \phi_s }\]
of a fermion bilinear form. Such an fermion operator defines a $\tau$-function of the BKP equation 
\be 
\tau_g (t)= <0|  e^{ H(t) } g |0>. \label{ta}
\ee
The space of the BKP $\tau$-function is the G-orbit of the vacuum vector  $|0>$.  The Hirota equation is 
\be 
\oint \frac{dz}{2\pi i z} e^{\xi (t'-t, z)} \tau (t'- 2[\frac{1}{z}] ) \tau (t+ 2[\frac{1}{z}] ) dz= \tau (t) \tau (t'), \label{hir}
\ee
where 
\[ [\frac{1}{z}]=( \frac{1}{z} ,  \frac{1}{3z^3},  \frac{1}{5z^5}, \cdots]. \]
The Hirota equation (\ref{hir}) is the consequence of the algebraic relation
\[ \sum_{j \in Z }(-1)^j \phi_j g \otimes \phi_{-j} g=  \sum_{j \in Z }(-1)^j g \phi_j  \otimes g \phi_{-j} \]
and the formula called the Boson-Fermion Correspondence \cite{ak}: 
\be
\sqrt{2}<1| e^{H(t)} \phi (p) | V>= X(t, z) <0| e^{H(t)} | V>,   \label{bf}
\ee 
holding  for arbitrary Fock space V defined in (\ref{fo}). Here  $X(t,z)$ is the vertex operator
\[  X(t,z)=  e^{\xi ( t,z)  } e^{-2 D(t, 1/z)}, \quad  D(t, 1/z)=  \sum_{n=0 }^{\infty} \frac{z^{-2n-1}}{2n+1}  \pa_{2n+1} .\]

\indent Next, the Schur Q polynomials are defined by , $c=(c_1, c_3, c_5, c_7, \cdots ) $ 
\[ e^{2 \xi (c,p)}=\sum_{k=0} q_k (c ) p^k.\]
For example, 
\bean 
&& q_0(c)=1, \quad q_1(c)= 2c_1, \quad q_2(c)=(4c_1^3)/3 + 2c_3, \quad q_4(c)= 4c_1c_3 + 2/3c_1^4, \\
&&  q_5(c)=2c_5 + 4c_1^2c_3 + 4/15c_1^5, \quad q_6(c)=4c_1c_5 + 8/3c_3c_1^3 + 2c_3^2 + 4/45c_1^6, \\
&& q_7(c)= 2c_7 + 4c_1^2c_5 + 4c_1c_3^2 + 4/3c_3c_1^4 + 8/315c_1^7, \\
&& q_8(c)=4c_1c_7 + 8/3c_5c_1^3 + 4c_5c_3 + 4c_1^2c_3^2 + 8/15c_3c_1^5 + 2/315c_1^8.
\eean
Thus 
\[\phi_i(c )=e^{H(c )} \phi_i | 0>=\sum_{k=0} q_k (c /2) \phi_{i-k}. \]
We have  \cite{ni}
\be <\phi_i(c ) \phi_j(c )>=\frac{1}{2} q_i (c /2) q_j (c /2)+ \sum^j_{k=1}(-1)^k q_{i+k } (c /2) q_{j-k} (c /2).\label{phi}
\ee
Since,  $ q_k (c /2)=0 $  if  $k<0$, 
\[ 1= e^{ 2\xi(p,c )} e^{ -2\xi(p,c )}= \sum_{i, j} q_i (c ) q_{j-i}  (-c )=  \sum_{i, j} (-1)^{i-j} q_i (c ) q_{j-i}  (c )p^j, \]
we have the orthogonal condition for all $n>0$ 
\be   
\sum^n_{i=0} (-1)^{i} q_i (c ) q_{n-i}  (c )=0. \label{or}
\ee
This is trivial if $n$ is odd and if $n=2m$ is even, then it gives 
\[ q_m (c )^2+2 \sum^m_{k=1} (-1)^{k} q_{m+k} (c ) q_{m-k}  (c )=0.\]
Then one can define 
\be 
q_{a ,b}(c )= q_a (c )q_b (c )+2 \sum^b_{k=1} (-1)^{k} q_{a+k} (c) q_{b-k}  (c ). \label{qf}
\ee
Here we notice that $q_{a,0}(c)= q_a(c)$.
It follows from the orthogonal condition (\ref{or}) that 
\[ q_{a,b}(c )=-q_{b,a}(c ),\]
and in particular, $ q_{a,a}(c )=0$. Comparing (\ref{phi}) and (\ref{qf}), one has 
\be 
q_{a,b}(c/2)=2  <\phi_i(c ) \phi_j(c )>.  \label{pf}
\ee
Now, consider $\lambda=(\lambda_1, \lambda_2,  \cdots, \lambda_{2n})$ , where $ \lambda_1> \lambda_2 > \lambda_3> \cdots > \lambda_{2n} \geq 0. $ The set of such distinct partition is denoted DP. For $\lambda \in DP$, 
we define
\be 
Q_{\lambda}(c /2)= Pf( q_{\lambda_i, \lambda_j}(c /2)). \label{ch}
\ee
This is the Schur Q function \cite{mac}. Here we notice that $Q_{\lambda_i, \lambda_j}=q_{\lambda_i, \lambda_j}$ and $ Q_{\lambda,0}= Q_{\lambda}=q_{\lambda}$. By the Wick Theorem, 
\bean
Q_{\lambda}(c /2)& =& Pf( 2 <\phi_{\lambda_i} (c ) \phi_{\lambda_j} (c )>) \\
&=& 2^n < \phi_{\lambda_1} (a)\phi_{\lambda_2} (a) \cdots \phi_{\lambda_{2n}} (a) > .
\eean 
Then the $\tau$-function \cite{or} of soliton solution of BKP is , noting that  $\xi (p , t)=e^{p x+p^3 y+ p^5 t}$, 
\bea 
\tau(c, t) &= & < e^{H(t)} e^{ \sum_{ 0 \leq i<j} q_{i,j}(c/2) \phi (p_i)  \phi_(p_j)} > \no \\  
&=& 1+ \sum_{0 \leq i<j} Q_{i,j}(c /2) \frac{1}{2} \frac{p_i-p_j}{p_i+p_j} e^{\xi (p_i, t)+\xi (p_j , t)}   \no \\
&+ & \sum_{0 \leq i<j<k<l} Q_{i,j,k, l}(c /2) \frac{1}{2^2}  \frac{p_j-p_i}{p_j+p_i} \frac{p_k-p_i}{p_k+p_i} \frac{p_l-p_i}{p_l+p_i} \frac{p_k-p_j}{p_k+p_j}\frac{p_l-p_j}{p_l+p_j}\frac{p_l-p_k}{p_l+p_k} \no \\
&& e^{\xi (p_i, t)+\xi (p_j , t)+ \xi (p_k, t)+ \xi (p_l, t)} \no \\
&+&\sum_{0 \leq i<j<k<l<m<n } Q_{i,j,k, l, m, n}(c /2) \frac{1}{2^3}\prod_{0 \leq  \alpha < \beta \leq n} \frac{p_{\alpha}-p_{\beta}}{p_{\alpha}+ p_{\beta}} e^{\xi (p_i, t)+\xi (p_j , t)+ \xi (p_k, t)+ \xi (p_l, t)+\xi (p_m, t)+\xi (p_n, t)} \no \\
&+& \cdots  \label{tc}
\eea
Given  a  Young diagram of DP   $ \lambda=(\lambda_1,  \lambda_2 ,  \lambda_3,  \cdots ,  \lambda_{2n} \geq 0) $, 
there are 
\[ {2n \choose 0} + {2n \choose 2 } + {2n \choose 4} + \cdots + {2n \choose 2n}  =2^{2n-1}\]
terms in the expansion (\ref{tc}). The sum in (\ref{tc}) is over all the DP of  $ \lambda$ and this corresponds to a $2n-1$ soliton. One remarks here that if $Q$ is any anti-symmetric matrix, then (\ref{tc}) is also a $\tau$-function for the BKP equation \cite{ni}. 

We notice the $\tau$-function (\ref{tc}) has the form 
\be 
\tau=\sum_{I \subset [1, 2, \cdots , 2n] } e^{ \xi (p_{i_1} , t)+\xi (p_{i_2} , t)+ \xi (p_{i_3}, t)+ \cdots +\xi (p_{i_{2m}}, t)+  \ln \rho_I} , \label{roo} 
\ee
where $ I=\{ i_1, i_2, \cdots , i_{2m} \} $ and 
\[ \rho_I=Q_I (\frac{c}{2}) \frac{1}{ 2^{m}} \prod_{0 \leq  \alpha < \beta \leq i_{2m}} \frac{p_{\alpha}-p_{\beta}}{p_{\alpha}+ p_{\beta}} .\]
The soliton structure is determined from the consideration of the dominant exponentials $ e^{ \xi (p_{i_1} , t)+\xi (p_{i_2} , t)+ \xi (p_{i_3}, t)+ \cdots +\xi (p_{i_{2m}}, t)}$ in the $\tau$-function at different regions of the $(x,y)$-plane. The soliton solution $u=2 \pa_x^2 \ln \tau$ is localized at the boundaries of two distinct regions where a balance exists between two dominant exponentials in (\ref{roo}). In each interior of these regions, the solution is exponentially small except the dominant exponential of a specific index set. We take two index sets $I=\{ i_1, i_2, \cdots , i_{2m} \} $  and $J=\{ j_1, j_2, \cdots , j_{2s} \} $. Then
\[ e^{\Theta_I }+ e^{\Theta_J}= 2 e^{\frac{\Theta_I +\Theta_J} {2}} \cosh \frac{\Theta_I -\Theta_J} {2}, \]
where 
\bean 
 \Theta_I &= & \xi (p_{i_1} , t)+\xi (p_{i_2} , t)+ \xi (p_{i_3}, t)+ \cdots +\xi (p_{i_{2m}}, t)+  \ln \rho_I \\
 \Theta_J &=& \xi (p_{j_1} , t)+\xi (p_{j_2} , t)+ \xi (p_{j_3}, t)+ \cdots +\xi (p_{j_{2s}}, t)+  \ln \rho_J.
\eean 
Close to the boundaries of  two dominant exponential $ \Theta_I $ and $ \Theta_J $, one yields 
\bea
2 \pa_x^2 \ln  \tau & \approx &  2 \pa_x^2 \ln ( e^{\Theta_I }+ e^{\Theta_J})= \frac{(\sum_{\alpha=1}^{2m}p_{i \alpha} -  \sum_{\beta=1}^{2s}p_{j \beta})^2 } {2} {sech^2} \frac{\Theta_I - \Theta_J}{2} \\
&=& \frac{(\sum_{\alpha=1}^{2m}p_{i \alpha} -  \sum_{\beta=1}^{2s}p_{j \beta})^2 } {2} {sech^2} \frac{ \textit{K}_{[I,J]} \centerdot  (x,y)+ \Omega_{[I,J]} t + \ln \frac{\rho_I}{\rho_J}}{2}, \label{am}
\eea   
where $    \textit{K}_{[I,J]} $ is the wave vector and   $\Omega_{[I,J]}   $  is the frequency defined by 
\bean
 \textit{K}_{[I,J]}& =& ( \sum_{\alpha=1}^{2m}p_{i \alpha} -  \sum_{\beta=1}^{2s}p_{j \beta}, \qd  \sum_{\alpha=1}^{2m}p_{i \alpha}^2-   \sum_{\beta=1}^{2s}p_{j \beta}^2  )=(\textit{K}_{[I,J]}^x,   \textit{K}_{[I,J]}^y ), \\
\Omega_{[I,J]} &=& \sum_{\alpha=1}^{2m}p_{i \alpha}^3-   \sum_{\beta=1}^{2s}p_{j \beta}^3. 
\eean  
The direction of wave vector $  \textit{K}_{[I,J]}$ is measured in the counter-clockwise sense from the $y$-axis, and it is given by 
\[ \tan \Phi_{[I,J]}= \frac{     \textit{K}_{[I,J]}^y    }{ \textit{K}_{[I,J]}^x}    =\frac{ (\sum_{\alpha=1}^{2m}p_{i \alpha}^2)- (  \sum_{\beta=1}^{2s}p_{j \beta}^2 )   }{ \sum_{\alpha=1}^{2m}p_{i \alpha} -  \sum_{\beta=1}^{2s}p_{j \beta}                         }\]      
This soliton is localized along $\Theta_I - \Theta_J=0$ and  has the amplitude $\frac{(\sum_{\alpha=1}^{2m}p_{i \alpha} -  \sum_{\beta=1}^{2s}p_{j \beta})^2 } {2} $ with the phase shift $ \ln \frac{\rho_I}{\rho_J}$. The soliton velocity is given by 
\[ V_{[I,J]}= \frac{-\Omega_{[I,J]}} {|\textit{K}_{[I,J]}|^2} \textit{K}_{[I,J]}=\frac{-\Omega_{[I,J]}} {(\textit{K}_{[I,J]}^x)^2+ (\textit{K}_{[I,J]}^y)^2} ( \textit{K}_{[I,J]}^x, \textit{K}_{[I,J]}^y).\]
The resonance condition among those three line-solitons is given by 
\be    \textit{K}_{[I,J]}=    \textit{K}_{[I,L]}  +\textit{K}_{[L,J]}, \qd  V_{[I,J]}=V_{[I,L]}    + V_{[L,J]}. \label{ro}   \ee
If some of the in-dices are identical, that is, 
\[i_1=j_1, i_2=j_2, \cdots , i_{2k}=j_{2k}, \qd k< min \{ m,s \}, \]
then it is called the $ [I \backslash K,   J\backslash K] $-soliton, where $K=\{i_1=j_1, i_2=j_2, \cdots , i_{2k}=j_{2k} \}. $ One remarks here that in  the KP solitons theory of resonance \cite{bc, ko6} the soliton is the balance of two dominant exponentials of the index sets $I =\{i, i_2, i_3, \cdots,    i_{m} \},  J  =\{j , i_2, i_3,  \cdots,   i_{m} \} $ and it is called the $[i,j]$-soliton.\\

Take 
\[  Q= \left [\ba {cccc} 0 &Q_{12}  &Q_{13} & Q_{14} \\
    -Q_{12} & 0&Q_{23} &Q_{24} \\   -Q_{13} &-Q_{23}  &0 & Q_{34}   \\ -Q_{14} &  -Q_{24}&-Q_{34}& 0
\ea \right]. \]
Then
\bea 
\tau_Q  &=& 1+ \sum_{ i<j} Q_{i,j} \frac{1}{2} \frac{p_i-p_j}{p_i+p_j} e^{\xi (p_i, t)+\xi (p_j , t)}   \no \\
&+ & Q_{1234} \frac{1}{2^2}  \frac{p_1-p_2}{p_1+p_2} \frac{p_1-p_3}{p_1+p_3} \frac{p_1-p_4}{p_1+p_4} \frac{p_2-p_3}{p_2+p_3}\frac{p_2-p_4}{p_2+p_4}\frac{p_3-p_4}{p_3+p_4} \no \\
&& e^{\xi (p_1, t)+\xi (p_2 , t)+ \xi (p_3, t)+ \xi (p_4, t)}, \no \label{tq}
\eea
where $Q_{1234}=Pf(Q)=Q_{12}Q_{34}-Q_{13}Q_{24}+Q_{14}Q_{23}.$ Please see the figure 1. \\
\begin{figure}[t]
	\centering
		\includegraphics[width=1\textwidth]{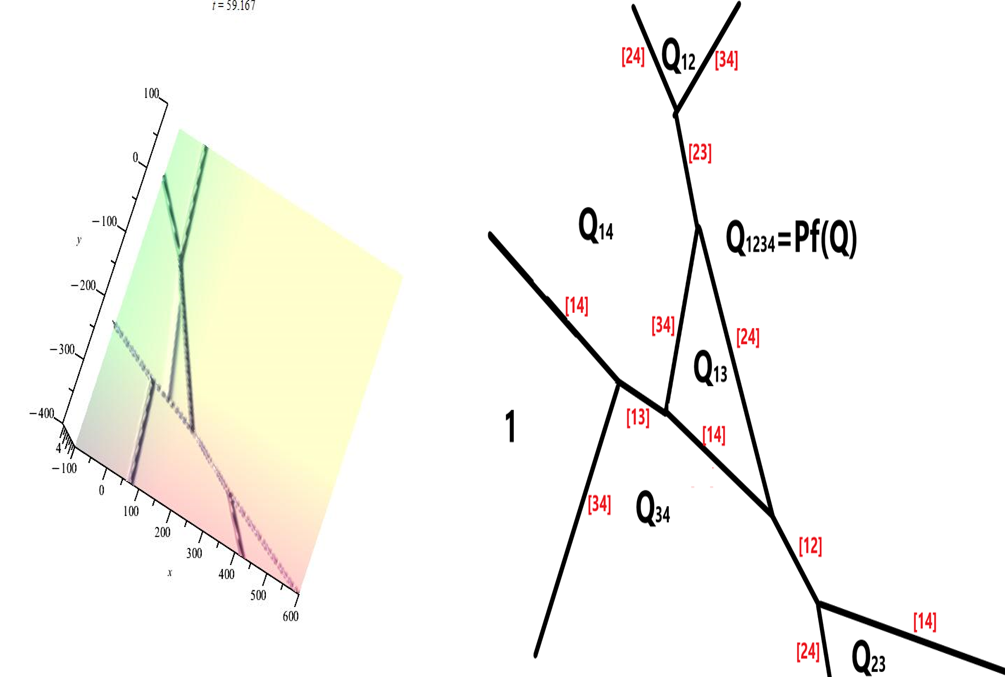}
	\caption{$p_1=2, p_2=1, p_3=0.5, p_4=0.2, Q_{12}=1, Q_{13}=2, Q_{14}=3, Q_{23}=4, Q_{24}=5, Q_{34}=6.$  The left panel is the soliton graph of  the $\tau$-function (\ref{tq}). The right panel shows that the plane is divided several regions, each having their own  corresponding dominant exponential (black color). The soliton $[IJ]$ (red color) is localized at the boundaries of two distinct regions. The three line-solitons $[ik], [kj] $ and $ [ij] $ form a resonant Y-type (or upside down) soliton.  There exists a triangle in the middle due to the resonance.}
\end{figure}
\indent To obtain non-singular resonant-solitons solutions of the BKP equation (\ref{bkp}), one has to assume $Q_{DP \in \lambda}(c) \geq 0$ for all DP of $ \lambda$ and the conditions
\[ p_1^2  > p_2^2 > p_3^2> \cdots > p_{2n}^2 \geq 0. \] 
Then one has the following
\begin{definition}
Let $A$ be a $2n \times 2n$ skew-symmetric matrix. Then $A$ is a totally positive (non-negative) Pfaffian if every $2m \times 2m ( m \leq n ) $ principal sub skew-symmetric matrix of $A$ has a positive (non-negative)  Pfaffian. 
\end{definition}

Fortunately, an alternative purely combinatorial definition has been given in terms of shifted Young diagrams \cite{st} and  $Q_{\lambda}(c) $ is the generating function summed over all marked shifted tableaux of shape $\lambda$. It can be described as follows. Given  a strict Young diagram   $ \lambda=(\lambda_1,  \lambda_2 ,  \lambda_3,  \cdots ,  \lambda_{2n} \geq 0) $ with length $2n$, a  shifted Young diagram is defined via 
\be 
D=\{ (i, j) \in \textsl{Z}^2 |  i \leq j \leq \lambda_i +i-1,  0 \leq j \leq 2n\}. \label{id}
\ee
Let $(1,2,3, \cdots, m) $ and $(1', 2', 3', \cdots , m')$  be two sequences of symbols ordered in such a way $ 1'<1<2'<2<3'<3 \cdots < m'<m$. A  tableau $T$ of a shifted Young diagram is an assignment 
\be
T : D  \to (1,2,3, \cdots, m, 1',2', 3', \cdots , m')    \label{tab}
\ee
such that \\
(1)$ T(i,j) \leq T(i+1, j), \quad T(i,j) \leq T(i, j+1)$ \\
(2) Each $i (i=1,2,3, \cdots m)$ appears at most once in each column.\\
(3) Each $i (i=1',2',3' , \cdots m')$ appears at most once in each row. \\
Let $ \alpha_k$=number of $T(i,j)=k $ or $k'$. Then for each tableau $T$ of a shifted Young diagram  we associate with a monomial  
\[ \eta (T)= a_1^{\alpha_1}a_2^{\alpha_2}a_3^{\alpha_3} \cdots a_m^{\alpha_m}. \]

Then we have the following relation \cite{st}
\be 
Q_{\lambda}(c)=\sum_T   \eta (T) ,    \label{st}                           
\ee 
where 
\be  c_{2k+1}=\frac{ a_1^{2k+1}+a_2^{2k+1}+a_3^{2k+1}+a_4^{2k+1} + \cdots+a_m^{2k+1} }{2k+1}
\label{rec}
\ee 
and the sum is over all the tableaux of the shifted Young diagram. For example, we take $m=4$ and then,  using  (\ref{qf})and (\ref{rec}),
\bean
q_{21}(c) &=& q_2(c) q_1(c) -2q_3(c) =4/3c_1^3-4c_3 \\
&=& (4a_1 + 4a_2 + 4a_4)a_3^2 + 4(a_1 + a_2 + a_4)^2a_3 + 4(a_2 + a_4)(a_1 + a_4)(a_1 + a_2) \\
q_{31}(c) &=& q_3(c) q_1(c) -2q_4(c) = 4c_1(c_1^3 - 3c_3)/3 \\
&=& 4(a_4 + a_3 + a_2 + a_1)[(a_4 + a_3 + a_2)a_1^2 + (a_4 + a_3 + a_2)^2a_1  \\
&+ & (a_4 + a_3)(a_2 + a_4)(a_3 + a_2)] \\
q_{32}(c) &=& q_3(c) q_2(c) -2q_4(c)q_1(c)+ 2q_5(c)= 8/15c_1^5 - 4c_1^2c_3 + 4c_5\\
&=& 4(a_1 + a_2 + a_4)^2a_3^3 \\
&+&  4[a_4^3 + 4(a_1 + a_2)a_4^2 + 4(a_1 + a_2)^2a_4 + a_1^3 + 4a_1^2a_2 + 4a_1a_2^2 + a_2^3]a_3^2 \\
&+& 8(a_2 + a_4)(a_1 + a_4)(a_1 + a_2)(a_1 + a_2 + a_4)a_3 \\
&+&  4[(a_1 + a_2)a_4 + a_1a_2](a_1 + a_2)(a_1 + a_4)(a_2 + a_4) \\
Q_{3210}(c)&=&  q_{32}(c)q_1(c)- q_{31}(c)q_2(c)+ q_3(c)q_{21}(c)\\
&=& 8/45c_1^6 - 8/3c_3c_1^3 - 8c_3^2 + 8c_1c_5\\
&=& 8(a_4 + a_3)(a_2 + a_4)(a_3 + a_2)(a_1 + a_4)(a_1 + a_3)(a_1 + a_2)
\eean
Then we have the following 
\begin{theorem}
Let $\lambda$ be a DP. Then $Q_{\lambda}(a) $ is a totally positive Pfaffian if $ a_i >0$ for $i=1,2,3, \cdots m$ defined in (\ref{rec}). 
\end{theorem}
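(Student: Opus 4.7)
The plan is to reduce total positivity to the positivity of individual Schur Q-functions, and then exploit the positive combinatorial expansion (\ref{st}).

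First I would note that every $2k\times 2k$ principal sub skew-symmetric matrix of the Pfaffian matrix $(q_{\lambda_i,\lambda_j}(c/2))_{i,j=1}^{2n}$ is obtained by selecting indices $1\le i_1<\cdots<i_{2k}\le 2n$. Because $\lambda$ is a DP, the resulting tuple $\mu:=(\lambda_{i_1},\ldots,\lambda_{i_{2k}})$ is itself a distinct partition, and by the very definition (\ref{ch}) the Pfaffian of this principal submatrix coincides with $Q_\mu(c/2)$. Therefore, total positivity of $Q_\lambda$ as a Pfaffian reduces to the single assertion $Q_\mu(a)>0$ for every sub-DP $\mu$ of $\lambda$.

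Next I would invoke the shifted-tableau formula (\ref{st}) for each such $\mu$:
\[
Q_\mu(a)\;=\;\sum_{T}\eta(T)\;=\;\sum_{T}a_1^{\alpha_1(T)}a_2^{\alpha_2(T)}\cdots a_m^{\alpha_m(T)},
\]
with the sum running over marked shifted tableaux $T$ of shape $\mu$ over the alphabet $\{1,1',\ldots,m,m'\}$. Every summand is a single monomial with coefficient one, so after collecting like terms $Q_\mu(a)$ is a polynomial in $a_1,\ldots,a_m$ with \emph{nonnegative} integer coefficients. Under the hypothesis $a_i>0$ for all $i$, each monomial is strictly positive; hence $Q_\mu(a)\ge 0$, with strict inequality provided at least one tableau of shape $\mu$ exists.

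The only remaining point -- and the one I view as the main (though modest) obstacle -- is to certify that the tableau sum is nonempty for every sub-DP $\mu$ of $\lambda$. Assuming $m$ is at least the length of $\lambda$ (which is implicit in (\ref{rec}), since otherwise $Q_\lambda$ vanishes and cannot be a positive Pfaffian in any case), I would simply assign the unprimed symbol $i$ to every cell of row $i$ of the shifted shape $\mu$ given by (\ref{id}). The three rules imposed on the map (\ref{tab}) are immediately satisfied: rows are constant (so no primed symbol repeats in a row), columns are strictly increasing (so no unprimed symbol repeats in a column), and the weak-increase conditions hold trivially. This tableau contributes the strictly positive monomial $\prod_{i}a_i^{\mu_i}$, so $Q_\mu(a)>0$. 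Since this applies to every sub-DP $\mu$ of $\lambda$, $Q_\lambda(a)$ is a totally positive Pfaffian.
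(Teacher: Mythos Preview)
Your proposal is correct and follows essentially the same route the paper relies on: the paper states the theorem immediately after quoting Stembridge's combinatorial expansion (\ref{st}) and some illustrative computations, leaving the argument implicit, and your proof simply makes that implicit argument explicit by observing that every principal sub-Pfaffian is itself a $Q_\mu$ and then invoking the monomial-positive expansion (\ref{st}). Your extra care in exhibiting a concrete tableau (row $i$ filled with the unprimed letter $i$) to guarantee strict positivity is a detail the paper does not spell out but is exactly what is needed.
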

This theorem can be generalized to skew Schur's Q function as follows. Given two strict Young diagrams  $ \mu$ and $\lambda$ with $ \mu \subseteq \lambda$, a shifted skew diagram is defined as $Q_{\lambda  / \mu}$ and its corresponding tableau is defined similarly above. Please see the figure 2. 
\begin{figure}[t]
	\centering
		\includegraphics[width=1\textwidth]{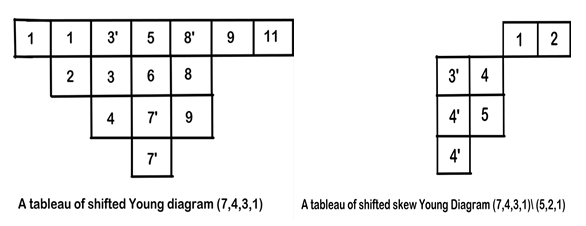}
	\caption{The monomial  of the left panel is $ \eta(T)=a_1^2 a_2a_3^2a_4 a_5 a_6 a_7^2a_8^2a_9^2 a_{11}$, and the monomial  of the right  panel is $ \eta (T)=a_1a_2a_3a_4^3 a_5$.} 
\end{figure}
We have the relation \cite{pj}
\be 
Q_{\lambda / \mu}(c)=\sum_T   \eta (T) ,    \label{sk}                           
\ee 
where $ Q_{\lambda / \mu}(c) $ is defined by 
\be 
Q_{\lambda / \mu}(c)=\left \{ \ba{cc} 
 Pf \left [\ba {cc} Q_{\lambda}(c) & M_{\lambda / \mu} (c)   \\ - M_{\lambda / \mu}^T  (c)  & O  \ea \right ]  \mbox{ if $l (\lambda)=l (\mu)  \qd mod \qd 2$ }\\
Pf \left [\ba{cc} Q_{\lambda^0}(c) & M_{\lambda^0 / \mu} (c)   \\ - M_{\lambda^0 / \mu}^T  (c) &  O    \ea \right ]
 \mbox{ if $l (\lambda) \neq l (\mu)  \qd mod \qd 2$ }
\ea \right.  \label{sub}
\ee
Here $l( \cdot )$ is  the length of  a strict Young diagram and  $\lambda^0 $ is $ (\lambda_1,  \lambda_2 ,  \lambda_3,  \cdots ,  \lambda_{n}>0 ,   0) $; moreover,  
\be 
 M_{\lambda / \mu} (c)=\left [Q_{(\lambda_i-\mu_{s+1-t}) }\right],  \qd 1 \leq t \leq s=l (\mu) , \label{bd}
\ee
where $\mu= ( \mu_1, \mu_2, \cdots , \mu_s) $ and $ Q_{k}=0 $ for $ k<0$. 
For example,  
\be
 Q_{(653/42)} (c) =Q_{(6530)/(42)}(c) = Pf \left [ \ba{cccccc} 0 & Q_{65} & Q_{63} & Q_6 & Q_4 & Q_2 \\
-Q_{65} &0 & Q_{53} & Q_{5} &Q_3 & Q_1 \\
-Q_{63} & -Q_{53} & 0 & Q_3 & Q_1 &0 \\
-Q_6 & -Q_{5} & -Q_3 &0 & 0&0 \\
-Q_4 & -Q_{3} & -Q_1 &0 &0 &0 \\
-Q_2 & -Q_{1} & 0 &0 &0 &0 \ea  \right], \label{ex}
\ee
where $ l(\lambda)=3$ , $ l(\mu)=2$ and 
\be M_{\lambda / \mu}(c)= \left [\ba {cc} Q_4 & Q_2 \\Q_3 & Q_1 \\ Q_1 &0 \\  0&0 \ea  \right]. \label{xx} \ee
 To consider totally-nonnegative Pfaffian, for any strict  Young diagram $\lambda'  \subset \lambda$ and  strict  Young diagram $ \mu' \subset \mu$, it's not difficult to  see that \cite{pj}
\[ Q_{\mu' \subset \lambda' } >0, \qd \mbox{and}  \qd  Q_{\mu' \nsubseteq \lambda' } =0 . \] 
As the last example (\ref{ex}), 
\[ Q_{(63)/(42)}(c) =Pf  \left [\ba {cccc} 0 &Q_{63}  &Q_4 & Q_2 \\
    -Q_{63} &0 &Q_1 & 0 \\   -Q_4 &-Q_{1}  &0 & 0   \\ -Q_2 &0  &0& 0
\ea \right]= Q_2 Q_1 >0. \]
Then we also have the following 
\begin{theorem}
The skew-symmetric matrix defined in (\ref{sub}) is a totally non-negative Pfaffian. 
\end{theorem}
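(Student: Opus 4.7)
The goal is to show that every principal sub-Pfaffian of the block matrix inside (\ref{sub}) is non-negative. The plan is to identify each such sub-Pfaffian as another instance of the same construction, applied to suitable subpartitions $\lambda'\subseteq\lambda$ and $\mu'\subseteq\mu$, and then to invoke the combinatorial formula (\ref{sk}) to conclude non-negativity.

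First, I would label the rows and columns of the matrix in (\ref{sub}) in two groups: an upper group indexed by the parts of $\lambda$ (or $\lambda^{0}$ in the parity-different case) and a lower group indexed by the parts of $\mu$, in the reverse order prescribed by (\ref{bd}). A principal submatrix is then determined by choosing a subset $I_{\lambda}$ of the upper indices and a subset $I_{\mu}$ of the lower indices. Let $\lambda'$ be the strict partition with parts $\{\lambda_{i}:i\in I_{\lambda}\}$ and $\mu'$ the strict partition with parts $\{\mu_{j}:j\in I_{\mu}\}$, both written in decreasing order.

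Next, I would verify that the resulting principal submatrix is exactly the one in (\ref{sub}) associated to $(\lambda',\mu')$. The upper-left block is the principal submatrix of $Q_{\lambda}(c)=Pf(q_{\lambda_{i},\lambda_{j}}(c))$ keeping only the rows and columns in $I_{\lambda}$, which is by definition the defining matrix of $Q_{\lambda'}(c)$. The off-diagonal block retains only the entries $Q_{\lambda_{i}-\mu_{j}}$ with $i\in I_{\lambda}$ and $j\in I_{\mu}$, and a quick re-indexing against (\ref{bd}) shows this is precisely $M_{\lambda'/\mu'}(c)$. The lower-right block is still the zero block. Consequently, formula (\ref{sub}) applied to $(\lambda',\mu')$ identifies the sub-Pfaffian with $Q_{\lambda'/\mu'}(c)$, after the usual parity correction replacing $\lambda'$ by ${\lambda'}^{0}$ when $l(\lambda')\not\equiv l(\mu')\bmod 2$.

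Finally, I would invoke the combinatorial expansion (\ref{sk}): $Q_{\lambda'/\mu'}(c)=\sum_{T}\eta(T)$, where the sum runs over marked shifted tableaux of skew shape $\lambda'/\mu'$ and each monomial $\eta(T)=a_{1}^{\alpha_{1}}a_{2}^{\alpha_{2}}\cdots$ is non-negative whenever $a_{i}\geq 0$, so the sub-Pfaffian is non-negative. In the degenerate case $\mu'\not\subseteq\lambda'$, no admissible tableau exists and the sub-Pfaffian vanishes, as already remarked in the paper just before the theorem; this case is still consistent with non-negativity. The main obstacle I anticipate is the careful book-keeping in the reduction step: one must check that an arbitrary subset of the reversed-$\mu$ indices really does yield a submatrix of the form $M_{\lambda'/\mu'}(c)$ under the convention (\ref{bd}), and that the two parity branches of (\ref{sub}) interact consistently with the operation of deleting rows and columns, in particular with the optional appending of a $0$ to form ${\lambda'}^{0}$.
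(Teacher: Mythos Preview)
Your proposal is correct and follows the same line of reasoning the paper sketches immediately before the theorem: principal submatrices of the block matrix in (\ref{sub}) are again of the form (\ref{sub}) for subpartitions $\lambda'\subseteq\lambda$ and $\mu'\subseteq\mu$, and the resulting sub-Pfaffian $Q_{\lambda'/\mu'}(c)$ is non-negative by the tableau expansion (\ref{sk}), vanishing precisely when $\mu'\not\subseteq\lambda'$. The paper does not spell out the book-keeping you flag regarding the parity branches and the appended $0$, so your more careful treatment of that point is a genuine (and necessary) addition rather than a deviation.
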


\indent Finally, one remarks, similar to the Plucker relations in totally non-negative Grassmannian \cite{ko3, ko6}, that we also have the Pfaffian-Plucker  relations in totally non-negative Pfaffian \cite{ba, oh}. Let $p$ and $r$ be two odd numbers and 
$\{ i_1, i_2 , i_3, \cdots, i_p \} $ and $\{ j_1, j _2 , j_3, \cdots, j_r \} $ be any two subsets of  $\{ \lambda_1, \lambda_2, \cdots , \lambda_{2n} \} $. Then the Pfaffian-Plucker  relation is 
\bea 
&& \sum_{k=1}^p (-1)^{k-1} Pf ( i_1, i_2, i_3 , \cdots, \hat i_k,\cdots, i_p) Pf (i_k, j_1, j _2 , j_3, \cdots, j_r) \no \\
&& = \sum_{s=1}^r  (-1)^{s-1} Pf ( i_1, i_2, i_3 , \cdots, i_p, j_s) Pf (j_1, j _2 , j_3,  \cdots, \hat j_s , \cdots, j_r), \label{pp}
\eea
where  $\hat i_k$ means the $i_k$ term is deleted and similarly for $\hat j_s$.\\

\section{Totally Non-negative Pfaffian in Block Form }
In this section, the block form for the totally non-negative Pfaffian is studied and the Cauchy-Bitnet formula is utilized  to obtain the transformation to preserve the totally non-negativity. 

We begin with the following
\begin{theorem}\cite{ok} \\
Suppose that $n+m$ is even. If $Z$ is an $ n \times n $ skew-symmetric matrix and $W$ is an $ n \times m $ matrix, then we have 
\be
Pf \left [\ba {cc} Z& W   \\ - W^T    & O_{m,m}  \ea \right ]= 
\left \{ \ba{lll}    \sum_{I} (-1)^{\sum (I)+ {n \choose 2}} Pf [Z(I)] det W ([n] \bk  I; [m]) && \mbox{ if $  n>m $ }   \\
 (-1)^{n \choose 2 }  det W  && \mbox{ if $  m=n $ }  \\
 0     && \mbox{ if $  n<m  $ },              \label{ok}          \ea       \right. 
\ee
where $[n]=\{1,2,3, \cdots, n\}, [m]=\{1,2,3, \cdots, m\}$ and $I$ runs over all $(n-m)$-elements of $[n]$. Also, 
$Pf(\emptyset)=1$.
\end{theorem}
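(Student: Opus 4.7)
The plan is to use the combinatorial definition of the Pfaffian,
$$Pf(A)=\sum_{\pi}\epsilon(\pi)\prod_{\{i,j\}\in\pi,\,i<j}a_{ij},$$
where $\pi$ ranges over perfect matchings of $[n+m]$ and $\epsilon(\pi)$ is the sign of the associated permutation. In the block matrix, any pair lying entirely in the bottom block $\{n+1,\dots,n+m\}$ contributes a factor from $O_{m,m}$, hence vanishes. Therefore the surviving matchings are exactly those in which every bottom index is paired with some top index in $[n]$. This immediately pins down the combinatorial skeleton of the proof.

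First I would handle the vanishing case. If $m>n$, there are more bottom indices than top indices, so any perfect matching must contain at least one bottom-bottom pair and $Pf$ is zero. Next, if $m=n$, every top index must pair with a bottom index, so the surviving matchings are in bijection with permutations $\sigma\in S_n$ via the pairing $\{i,\,n+\sigma(i)\}$. The product of matrix entries along such a matching is $\prod_i W_{i,\sigma(i)}$, so after computing the sign of the permutation in standard matching order one gets $(-1)^{\binom{n}{2}}\det W$.

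For the main case $n>m$, I would partition matchings by the set $I\subseteq[n]$ of top indices that are paired among themselves using $Z$. The complement $[n]\setminus I$ must biject onto the bottom block, forcing $|I|=n-m$ (which is even since $n+m$ is even, so $Pf(Z(I))$ makes sense). For a fixed $I$, summing over matchings within $I$ gives $Pf(Z(I))$, while summing over bijections $[n]\setminus I\leftrightarrow[m]$ gives, up to a sign, $\det W([n]\setminus I;[m])$. Totalling yields the stated formula once the signs are accounted for.

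The hard part will be the sign bookkeeping. Concretely, each matching's sign depends on the permutation of $[n+m]$ obtained by listing the pairs in a chosen canonical order; when we split a matching into its $I$-part and its crossing part, the canonical orders do not quite match up and an interleaving (shuffle) sign enters. The cleanest route is to fix the ordering that puts the indices of $I$ first in increasing order, then $[n]\setminus I$ in increasing order, then the bottom block; this conversion costs exactly $(-1)^{\sum(I)-\binom{|I|}{2}}$ relative to the natural order on $[n]$. Combined with the $(-1)^{\binom{n}{2}}$ coming from the structural reversal in the $-W^T$ block (the same factor already identified in the $m=n$ case), the cumulative sign collapses to $(-1)^{\sum(I)+\binom{n}{2}}$. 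Verifying that these parity contributions indeed add to the claimed exponent, and that no extra signs appear from the $-W^T$ block after the shuffle, is the delicate point; I would check it either by direct induction on $n-m$ or by comparing with the $m=n$ case (which serves as a base) and then removing one top index at a time.
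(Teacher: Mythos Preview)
The paper does not supply a proof of this statement at all: it is quoted verbatim from Okada~\cite{ok} and used as a black box for the subsequent results (Theorems~3.2--3.6). So there is no ``paper's own proof'' to compare against.

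On its own merits, your approach is the standard and correct one. Expanding the Pfaffian over perfect matchings and observing that any bottom--bottom pair kills the term immediately gives the trichotomy on $m$ versus $n$, and in the case $n>m$ the surviving matchings are naturally indexed by the set $I\subset[n]$ of size $n-m$ together with an inner matching on $I$ and a bijection $[n]\setminus I\to[m]$. The factorisation into $Pf(Z(I))\cdot\det W([n]\setminus I;[m])$ is then immediate up to sign. One small caution on your sign sketch: the shuffle that moves $I$ to the front of $[n]$ has sign $(-1)^{\sum(I)-\binom{|I|+1}{2}}$, not $(-1)^{\sum(I)-\binom{|I|}{2}}$, since $\sum_{k=1}^{|I|}k=\binom{|I|+1}{2}$; you will want to reconcile this with the remaining factors (in particular using that $|I|=n-m$ is even) when you carry out the bookkeeping in full. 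Your plan to verify the final exponent by induction on $n-m$, anchored at the $m=n$ case, is a sound way to pin this down.
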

Then we have the following 
\begin{theorem}
Let $ l(\lambda)=n,  l(\mu)=m$ and $n+m$ is even. Then the  anti-block form 
\be 
\hat M_{\lambda / \mu}= \left [\ba {cc} O_{n, n} & M_{\lambda / \mu} (c)   \\ - M_{\lambda / \mu}^T  (c)  & O_{m, m}      \ea \right ], 
  \label{an}
\ee
where $ M_{\lambda / \mu} (c)$ is defined in (\ref{bd}),  is a totally non-negative Pfaffian. 
\end{theorem}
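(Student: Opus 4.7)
My plan is to prove that every principal even-order sub-skew-symmetric matrix of $\hat M_{\lambda/\mu}$ has non-negative Pfaffian by reducing each such Pfaffian to a skew Schur Q-function $Q_{\lambda'/\mu'}(c)$, whose non-negativity then follows from the combinatorial formula (\ref{sk}) together with the positivity of the variables $a_i$.

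First I would describe an arbitrary principal submatrix. Choosing index subsets $S_1\subseteq\{1,\dots,n\}$ and $S_2\subseteq\{1,\dots,m\}$ from the two blocks, the natural order of indices already places all of $S_1$ before all of $S_2$, so no row/column reordering is needed and the submatrix inherits the same anti-block shape
$$\begin{pmatrix} O_{|S_1|,|S_1|} & M_{\lambda/\mu}(S_1,S_2) \\ -M_{\lambda/\mu}(S_1,S_2)^T & O_{|S_2|,|S_2|} \end{pmatrix}.$$
Theorem 2.6 with $Z=O$ then forces the Pfaffian to vanish whenever $|S_1|\ne|S_2|$ (for $|S_1|>|S_2|$, each term in the sum picks up a factor $Pf[O_{|S_1|-|S_2|}]=0$ since the difference is a positive even integer; the $|S_1|<|S_2|$ case is zero by the third branch of (\ref{ok})), leaving only the square case $|S_1|=|S_2|=:k$, where the Pfaffian equals $(-1)^{\binom{k}{2}}\det M_{\lambda/\mu}(S_1,S_2)$.

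Next I would extract strict partitions $\lambda'$ and $\mu'$ of length $k$ from $\lambda,\mu$ using the index subsets $S_1,S_2$, with a reversal built into $\mu'$ in order to undo the reversal already present in the definition (\ref{bd}) of $M_{\lambda/\mu}$. An entry-by-entry check will then identify $M_{\lambda/\mu}(S_1,S_2)$ with $M_{\lambda'/\mu'}(c)$, using the convention $Q_j=0$ for $j<0$ to cover the case $\mu'\not\subseteq\lambda'$. Because $l(\lambda')=l(\mu')=k$ have the same parity, formula (\ref{sub}) applies and a second use of Theorem 2.6 in the square $n=m=k$ case gives
$$Q_{\lambda'/\mu'}(c)=(-1)^{\binom{k}{2}}\det M_{\lambda'/\mu'}(c).$$
Combining this with the previous reduction shows that the Pfaffian of the principal submatrix is precisely $Q_{\lambda'/\mu'}(c)$, which by (\ref{sk}) is a sum of monomials in the positive $a_i$ (and is identically zero when $\mu'\not\subseteq\lambda'$), hence is non-negative.

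The step I expect to be the main obstacle is the combinatorial bookkeeping in the third paragraph, namely tracking the column reversal $\mu'_p=\mu_{m+1-t_{k+1-p}}$ carefully enough to check both that $\mu'$ comes out as a genuine strict partition and that the extracted submatrix matches $M_{\lambda'/\mu'}(c)$ on the nose. Once this identification is in place, the rest is a direct assembly of Theorem 2.4, Theorem 2.6, and the Pfaffian formula (\ref{sub}).
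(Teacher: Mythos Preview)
Your argument is correct and closely parallels the paper's, but the paper takes a shorter route at the key step. Both of you begin by observing that a principal submatrix of $\hat M_{\lambda/\mu}$ indexed by $S_1\subseteq[n]$ and $S_2\subseteq[m]$ retains the anti-block shape, and that its Pfaffian vanishes unless $|S_1|=|S_2|=k$. Where you then carry out the explicit bookkeeping to identify $(-1)^{\binom{k}{2}}\det M_{\lambda/\mu}(S_1,S_2)$ with a specific $Q_{\lambda'/\mu'}(c)$ and invoke~(\ref{sk}), the paper instead notes that in the square case $m=n=k$ of Theorem~3.1 the Pfaffian $(-1)^{\binom{k}{2}}\det W$ is independent of the upper-left block $Z$. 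Hence the Pfaffian of the principal submatrix of $\hat M_{\lambda/\mu}$ equals the Pfaffian of the \emph{same} principal submatrix of the full skew Schur matrix~(\ref{sub}), and the latter is already known to be non-negative by Theorem~2.2. This ``replace $Z=O$ by $Z=Q_\lambda$'' trick lets the paper appeal to Theorem~2.2 as a black box, sidestepping the index-reversal bookkeeping you flagged as the main obstacle. Your approach, by contrast, is more self-contained and yields the extra information of exactly which skew Schur $Q$-function each principal Pfaffian equals.
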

\begin{proof}
In the skew-symmetric matrix in (\ref{sub}), to obtain any non-zero principal minor, one chooses any first $k$ rows, $\{ i_1, i_2, \cdots, i_k \} \subset [n] $ and  another last  $k$ rows, $ \{ j_1, j_2, \cdots , j_k \}  \subset \{n+1, n+2, \cdots , n+m \}$. Then one chooses the columns \\
$ \{ i_1, i_2, \cdots, i_k, j_1, j_2, \cdots , j_k\} $ to form a $ 2k \times 2k$ skew-symmetric matrix, which has the block form $m=n=k$ in (\ref{ok}).  This Pfaffian is independent of $Z$ and we are able to set $Z=O$. This completes the proof. 
\end{proof}

We illustrate the proof by the example (6530)/(42), n=4, m=2. In the matrix (\ref{ex}), one chooses $ \{i_1=1, i_2=2\} $ and $\{ j_1=5, j_2=6 \} $ . Then the skew-symmetric matrix corresponding to (65)/(42) is 
\[ Q_{(65)/(42)}=\left [\ba {cccc} 0 &Q_{65}  &Q_4 & Q_2 \\
    -Q_{65} &0 &Q_3 & Q_1 \\   -Q_4 &-Q_{3}  &0 & 0   \\ -Q_2 &-Q_1  &0& 0
\ea \right], and  \qd \hat M_{(65)/(42)}=\left [\ba {cccc} 0 &0  &Q_4 & Q_2 \\
    0 &0 &Q_3 & Q_1 \\   -Q_4 &-Q_{3}  &0 & 0   \\ -Q_2 &-Q_1  &0& 0
\ea \right] .  \]
Then 
\[Pf (Q_{(65)/(42)})= Pf (\hat M_{(65)/(42)})= (-1) det W=(-1) [( Q_4 Q_1)-(Q_2 Q_3)]=Q_2 Q_3-Q_4 Q_1 >0, \]
where  
\[W= \left [\ba{cc} Q_4 & Q_2 \\ Q_3 & Q_1 \ea \right]. \]
The pfaffian is independent of $Q_{65}$. We notice that the pfaffian of $ \hat M_{\lambda / \mu}$ defined in (\ref{an}) is zero using the result (\ref{ok}) on the condition $ n > m$. \\

Next, we consider the transformations to preserve the property of being totally nonnegative pfaffian. Let's recall the Cauchy-Binet formula. Suppose $A$ and $B$ are $m \times n$ and $n \times p$ matrices respectively. Let $C=AB$. Assume  $ \alpha \subset[m],  \beta \subset[p]$ and $| \alpha |=| \beta |=k \leq min (m,n,p)$. Then we have the Cauchy-Binet formula between the minors 
\be det C( \alpha , \beta)= \sum_{\gamma} det A( \alpha , \gamma) det B(\gamma , \beta), \label{ba} \ee
where the sum is taken over the subsets of $n$ with $| \gamma|=k$.  Let 
\be 
\hat W= \left [\ba {cc} O_{n, n} & W  \\ - W^T  & O_{m, m}      \ea \right ], 
  \label{po}
\ee
where $n+m$ is even and $W$ is a $ n \times m$ matrix ( $ n \geq m$) . Then we have the 
\begin{theorem}
Suppose the skew-symmetric matrix (\ref{po}) is totally non-negative Pfaffian, and define 
\be 
\hat {W}_L= \left [\ba {cc} O_{n, n} & LW  \\ - (LW)^T    & O_{m, m}      \ea \right ]  \qd and \qd 
\hat {W}^R= \left [\ba {cc} O_{n, n} & WR  \\ - (WR)^T    & O_{m, m},      \ea \right ] \label{ma}
\ee
where $ L$ and $R$ are $n \times n$ and $ m \times m $ totally non-negative matrices, respectively. Then both $ \hat {W}_L$ and $ \hat {W}^R$ are totally non-negative Pfaffians.
\end{theorem}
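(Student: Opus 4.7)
The strategy is to combine the block-form Pfaffian identity (\ref{ok}) with the Cauchy-Binet formula (\ref{ba}). For a matrix of the anti-block shape (\ref{po}) whose diagonal blocks are zero, (\ref{ok}) collapses every principal sub-Pfaffian either to zero or, up to a uniform sign, to a single ordinary determinant. Applying Cauchy-Binet to $LW$ or $WR$ and then undoing the Pfaffian-to-determinant reduction term by term exhibits each principal sub-Pfaffian of $\hat{W}_L$ (respectively $\hat{W}^R$) as a non-negative combination of non-negative minors of $L$ (respectively $R$) with non-negative principal sub-Pfaffians of $\hat{W}$.

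First I would treat $\hat{W}_L$. Take any principal submatrix indexed by $S \subset [n+m]$, decomposed as $S = I \cup J'$ with $I = S \cap [n]$ and $J' = S \cap \{n+1, \dots, n+m\}$; write $J \subset [m]$ for the shift of $J'$. The selected submatrix still has anti-block shape with zero diagonal blocks. If $|I| \neq |J|$, the Pfaffian vanishes: if $|I|<|J|$ this is the third line of (\ref{ok}), while if $|I|>|J|$ every $Pf[Z(I')]$ in the first line of (\ref{ok}) is the Pfaffian of a zero skew-symmetric matrix of positive size and hence zero. Only the case $|I|=|J|=k$ contributes, and the $n=m$ line of (\ref{ok}) gives
\be
Pf\left(\hat{W}_L[S]\right) = (-1)^{\binom{k}{2}} \det (LW)[I;J].
\ee
Cauchy-Binet (\ref{ba}) then yields
\be
\det (LW)[I;J] = \sum_{\gamma \subset [n],\, |\gamma|=k} \det L[I;\gamma]\, \det W[\gamma;J],
\ee
and applying the $n=m$ line of (\ref{ok}) once more identifies
\be
(-1)^{\binom{k}{2}} \det W[\gamma;J] = Pf\left(\hat{W}[\gamma \cup J']\right).
\ee
Multiplying through by $(-1)^{\binom{k}{2}}$, the two sign factors cancel and
\be
Pf\left(\hat{W}_L[S]\right) = \sum_{\gamma \subset [n],\, |\gamma|=k} \det L[I;\gamma]\, Pf\left(\hat{W}[\gamma \cup J']\right).
\ee
Each $\det L[I;\gamma]\geq 0$ by total non-negativity of $L$, and each $Pf(\hat{W}[\gamma \cup J']) \geq 0$ by hypothesis on $\hat{W}$, so the sum is non-negative.

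For $\hat{W}^R$ the same reduction applies, but Cauchy-Binet is now applied to the second factor, producing
\be
Pf\left(\hat{W}^R[S]\right) = \sum_{\gamma \subset [m],\, |\gamma|=k} Pf\left(\hat{W}[I \cup \gamma']\right)\, \det R[\gamma;J],
\ee
where $\gamma'$ denotes the shift of $\gamma$ into the block $\{n+1,\dots,n+m\}$; non-negativity follows for the same reason. The main obstacle is the sign bookkeeping: the factor $(-1)^{\binom{k}{2}}$ produced by reducing the $2k\times 2k$ block anti-diagonal Pfaffian to a $k\times k$ determinant must appear \emph{identically} on both the composite matrix $LW$ (or $WR$) and each factor block $W$ of the same size $k$. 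Because the exponent $\binom{k}{2}$ depends only on $k$, this cancellation is automatic, and Cauchy-Binet's sign-free expansion translates without a sign twist into a manifestly non-negative Pfaffian expansion. Everything else is combinatorial: total non-negativity of $L$ and $R$ handles the determinantal factors, and total non-negativity of $\hat{W}$ handles the Pfaffian factors.
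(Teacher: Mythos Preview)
Your proof is correct and follows exactly the route the paper indicates: the paper's own argument is the single sentence that the theorem ``is a result of the Cauchy-Binet formula (\ref{ba}) and the case $m=n$ in (\ref{ok}),'' and your proposal is a careful unpacking of precisely that combination, including the sign cancellation of the $(-1)^{\binom{k}{2}}$ factors. Nothing is missing, and no alternative method is in play.
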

This theorem is a result of  the Cauchy-Binet formula (\ref{ba}) and the case $m=n$ in (\ref{ok}). 

Let's take the example (\ref{ex})  and consider the totally non-negative matrix in the (0,1)-double echelon form 
\cite{fj}
\[
L= \left [\ba {cccc} 1 & 1 & 0 & 0   \\  0 & 1& 0& 0 \\ 0&1 &1 &1 \\ 0&0 &1 &1  \ea \right ] 
 \qd \mbox{and}   \qd    R= \left [ \ba {cc} 1&1 \\ 0& 1\ea \right]                                    \] 
Then 
\[ LM_{\lambda / \mu}= \left [\ba {cc} Q_3+Q_4 & Q_1+Q_2\\ Q_3 & Q_1\\ Q_1+Q_3& Q_1 \\  Q_1&0  \ea  \right].\qd \mbox{and}   \qd    M_{\lambda / \mu} R= \left [ \ba {cc} Q_4& Q_2+Q_4 \\ Q_3 &Q_1+Q_3 \\  Q_1 & Q_1 \\ 0 &0  \ea \right]        \]
One can verify directly both $\hat {W}_L$ and $\hat {W}^R$ are  totally non-negative Pfaffians. 

In particular, we assume $W$  has the anti-diagonal form
\be W=antdiag[1,1,1, \cdots, 1] =W^{-1} \label{ti}. \ee
Then we have 
\begin{theorem}
Given any $ n \times n$ totally non-negative matrix $L$ and any $ m \times m $ totally non-negative matrix $R$,  and assume $W$ has the form  (\ref{ti}), then both $\hat {W}_L$ and $\hat{W}^R$ in (\ref{ma}) are totally non-negative Pfaffians.
\end{theorem}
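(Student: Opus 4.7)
The plan is to reduce the claim to Theorem~3.3 by verifying its hypothesis for the specific anti-diagonal $W$: once $\hat W$ itself is shown to be a totally non-negative Pfaffian, Theorem~3.3 applied on the left and on the right delivers the conclusion for both $\hat W_L$ and $\hat W^R$ automatically.

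First I would record that the anti-diagonal form forces $W$ to be square, so $n=m$ and $n+m=2n$ is even, placing us in the scope of Theorem~3.1. A generic $2k\times 2k$ principal submatrix of $\hat W$ is specified by a subset $S\subset\{1,\ldots,2n\}$ of size $2k$; setting $I=S\cap[n]$ and $J'=\{s-n:s\in S,\,s>n\}$, this submatrix retains the same anti-block structure
\[
\begin{pmatrix} O_{|I|\times|I|} & W[I,J']\\ -W[I,J']^{T} & O_{|J'|\times|J'|}\end{pmatrix},
\]
so Theorem~3.1 applies directly to it, with the role of $Z$ played by the zero block.

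The computation then splits into two cases. When $|I|\neq|J'|$, every term of the Okada expansion supplied by Theorem~3.1 contains a factor $Pf[Z(I')]$ with $|I'|=||I|-|J'||>0$; this is the Pfaffian of a non-trivial zero matrix and therefore vanishes, so the principal Pfaffian is zero. When $|I|=|J'|=k$, Theorem~3.1 reduces the Pfaffian to $(-1)^{\binom{k}{2}}\det W[I,J']$. Since the only non-zero entries of the anti-diagonal $W$ lie at positions $(i,\,n+1-i)$, the submatrix $W[I,J']$ has a non-vanishing determinant precisely when $J'=\{n+1-i:i\in I\}$; in that case $W[I,J']$ is the $k\times k$ anti-identity with determinant $(-1)^{\binom{k}{2}}$. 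The two signs cancel and the principal Pfaffian equals $1$ (or $0$ otherwise), so $\hat W$ is a totally non-negative Pfaffian.

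With this verified, Theorem~3.3, whose proof uses the Cauchy--Binet identity~(\ref{ba}) together with the $m=n$ case of Theorem~3.1, transfers the non-negativity under left and right multiplication by any totally non-negative matrices $L$ and $R$, yielding the claim for $\hat W_L$ and $\hat W^R$. The main obstacle I expect is the sign bookkeeping at the determinant step: one must check uniformly across all $k$ that the factor $(-1)^{\binom{k}{2}}$ supplied by Theorem~3.1 is precisely the inverse of the anti-identity determinant sign, so that the cancellation producing non-negativity holds without exception. Once this is confirmed, the reduction to Theorem~3.3 is immediate.
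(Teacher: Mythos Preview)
Your argument is correct and rests on the same sign cancellation as the paper's, but the organization differs. You first verify that $\hat W$ itself is a totally non-negative Pfaffian (every principal Pfaffian is $0$ or $1$) and then invoke Theorem~3.3 to transport this to $\hat W_L$ and $\hat W^R$. The paper does not route through Theorem~3.3; it computes the principal Pfaffians of $\hat W_L$ directly: the $m=n$ case of Theorem~3.1 gives $(-1)^{\binom{k}{2}}\det\,(LW)[\alpha_1,\alpha_2]$, and since right-multiplication by the anti-identity $W$ merely reverses the column order, this equals $(-1)^{\binom{k}{2}}(-1)^{\binom{k}{2}}\det L[\alpha_1,\alpha_2']=\det L[\alpha_1,\alpha_2']\ge 0$ by total non-negativity of $L$. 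Your route is more modular and makes the dependence on Theorem~3.3 explicit; the paper's one-line computation is slightly more elementary because, $W$ being a permutation matrix, no Cauchy--Binet sum over intermediate index sets is ever needed.
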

\begin{proof}
It is known that for any sub-matrix $W^{(k)}$ of size k in $W$ 
\[ det W^{(k)}= (-1)^{k \choose 2}.\] 
Also, using  (\ref{ok}), for any non-zero principal minor, one has 
\bean Pf ( {\hat W}_L(\alpha, \alpha) ) &=& (-1)^{k \choose 2} det LW^{(k)}(\alpha_1 , \alpha_2  \mbox{ mod  n} ) \\
&=& (-1)^{k \choose 2}(-1)^{k \choose 2} det L( \alpha_1 , \alpha_2   \mbox{ mod  n} )=det L(\alpha_1 , 
\alpha_2  \mbox{ mod  n} ),
\eean
where $\alpha=\{ \alpha_1, \alpha_2, \cdots, \alpha_k, \alpha_{k+1}, \alpha_{k+2},  \cdots , \alpha_{2k}\}  \subset  [2n],   \alpha_1 =\{ \alpha_1, \alpha_2, \cdots, \alpha_k \} \subset [n], $  and $  \alpha_2=\{ \alpha_{k+1}, \alpha_{k+2},  \cdots , \alpha_{2k}\} \subset \{ n+1, n+2, \cdots, n+m\} $.  Using the total non-negativeness of the matrix $L$, one obtains  $\hat {W}_L$ is totally non-negative pfaffian. A similar consideration obtains that $\hat{W}^R$ is  totally non-negative pfaffian.  
\end{proof}
\begin{theorem}
Let 
\[ \Omega= \left [\ba {cc}  Z & O  \\ O     & Z' \ea \right].  \]
If both $Z$ and $Z'$ are totally non-negative Pfaffians, then $\Omega$ is a totally non-negative Pfaffian. 
\end{theorem}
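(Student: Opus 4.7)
The plan is to exploit the block-diagonal structure of $\Omega$ and use the multiplicativity of the Pfaffian under direct sum. First I note that every principal sub-skew-symmetric matrix of $\Omega$ is itself block-diagonal: letting $n$ and $n'$ denote the sizes of $Z$ and $Z'$, any index set $I \subset \{1, 2, \dots, n+n'\}$ splits as $I = I_1 \sqcup I_2$ with $I_1 \subset \{1, \dots, n\}$ and $I_2 \subset \{n+1, \dots, n+n'\}$, and the corresponding principal submatrix of $\Omega$ equals the direct sum $Z(I_1) \oplus Z'(I_2 - n)$.

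Next I would invoke the standard identity
\[
\mathrm{Pf}(B \oplus C) = \mathrm{Pf}(B)\,\mathrm{Pf}(C),
\]
valid whenever $B$ and $C$ are skew-symmetric matrices of even size. This follows at once from the perfect-matching definition of the Pfaffian: in a block-diagonal matrix no matching edge can cross between blocks without producing a zero entry, so the sum over perfect matchings of $B \oplus C$ factors as a product over the two blocks.

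With these two ingredients in hand, I would finish by a parity case split. Since $|I| = |I_1| + |I_2|$ must be even for there to be a Pfaffian to evaluate, either both $|I_1|, |I_2|$ are even or both are odd. In the even/even case the identity above gives $\mathrm{Pf}(\Omega_I) = \mathrm{Pf}(Z(I_1))\,\mathrm{Pf}(Z'(I_2-n)) \geq 0$, since both factors are non-negative by the totally non-negative Pfaffian hypotheses on $Z$ and $Z'$. In the odd/odd case, each of $Z(I_1)$ and $Z'(I_2-n)$ is an odd-dimensional skew-symmetric matrix with vanishing determinant, hence $\det(\Omega_I) = 0$; combined with the identity $\det(\Omega_I) = \mathrm{Pf}(\Omega_I)^2$ this forces $\mathrm{Pf}(\Omega_I) = 0$.

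There is no substantive obstacle: the argument reduces to the decoupling of Pfaffians along the block decomposition, together with the parity observation handling the odd/odd case. The only mildly delicate point is that direct-sum multiplicativity of the Pfaffian is usually stated only for even blocks, so the odd/odd scenario must be dispatched separately via the determinant identity rather than by invoking multiplicativity directly.
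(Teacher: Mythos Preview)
Your argument is correct and follows essentially the same route as the paper's own proof, which simply invokes the block-diagonal Pfaffian identity
\[
\mathrm{Pf}\begin{pmatrix} Z^{(k)} & O \\ O & Z'^{(k')} \end{pmatrix} = \mathrm{Pf}(Z^{(k)})\,\mathrm{Pf}(Z'^{(k')})
\]
without further comment. Your explicit parity split for the odd/odd case is a refinement the paper omits: it tacitly uses a convention in which the Pfaffian of an odd-sized skew-symmetric block is zero, so that the displayed identity holds uniformly; your treatment via $\det(\Omega_I)=\mathrm{Pf}(\Omega_I)^2=0$ makes that step rigorous.
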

\begin{proof}
Use
\[  Pf   \left [\ba {cc}  Z^{(k)} & O  \\ O     & Z'^{(k')}  \ea \right ]   = Pf (Z^{(k)}) Pf (Z'^{(k')}),     \]
where $Z^{(k)}$ and $ Z'^{(k')}$ are sub-matrices of $Z$ and $Z'$ respectively with size $k$ and $k'$. 
\end{proof}

On the other hand, to construct  a totally non-negative matrix,  using (\ref{ti}) and $M_{\lambda / \mu} $ in (\ref{an}), one defines 
\be M_L=W M_{\lambda / \mu} \qd  \mbox{and} \qd M_R= M_{\lambda / \mu}W. \label{mm} \ee
Here the first $W$ has  size  $ l (\lambda) $ and the second $W$ has size  $  l (\mu) $. We also have the 
\begin{theorem}
Both $M_L$ and $M_R$ defines in (\ref{mm}) are totally non-negative matrix. 
\end{theorem}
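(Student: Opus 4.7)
The plan is to reduce total non-negativity of $M_L$ and $M_R$ to the already-established Theorem 3.2 (the anti-block skew-symmetric matrix $\hat M_{\lambda/\mu}$ is a totally non-negative Pfaffian), combined with Okada's formula (Theorem 3.1) in its balanced case $n=m$.

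First, I would reinterpret the products in (\ref{mm}). Since $W=\mathrm{antdiag}[1,\ldots,1]$, left multiplication by $W$ simply reverses the rows of $M_{\lambda/\mu}$. Consequently, for any row set $I=\{i_1<\cdots<i_k\}\subset[n]$ and column set $J=\{j_1<\cdots<j_k\}\subset[m]$, the $k\times k$ minor of $M_L$ satisfies
\[
\det M_L(I,J) = (-1)^{\binom{k}{2}} \det M_{\lambda/\mu}(I',J),
\]
where $I'=\{n+1-i_k<\cdots<n+1-i_1\}$ is the reflection of $I$ inside $[n]$ rearranged in natural order; the sign arises from reversing the order of $k$ rows.

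Next, I would apply Theorem 3.1 to the principal $2k\times 2k$ sub-skew-matrix of $\hat M_{\lambda/\mu}$ obtained by picking row/column set $I'\sqcup(n+J)\subset[n+m]$. That sub-matrix again has the anti-block shape of Theorem 3.1 with $Z=O_{k,k}$ and off-diagonal block $M_{\lambda/\mu}(I',J)$ of size $k\times k$, so the balanced case of Okada's formula yields
\[
\operatorname{Pf}\hat M_{\lambda/\mu}\bigl(I'\sqcup(n+J)\bigr) = (-1)^{\binom{k}{2}} \det M_{\lambda/\mu}(I',J).
\]
Substituting into the previous identity, the two signs cancel and I obtain
\[
\det M_L(I,J) = \operatorname{Pf}\hat M_{\lambda/\mu}\bigl(I'\sqcup(n+J)\bigr) \geq 0,
\]
where non-negativity is Theorem 3.2. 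Hence every minor of $M_L$ is non-negative, so $M_L$ is totally non-negative. The argument for $M_R$ is entirely parallel: right-multiplication by $W$ of size $m$ reverses the columns of $M_{\lambda/\mu}$, each $k\times k$ column-reversal again contributes a factor $(-1)^{\binom{k}{2}}$ that cancels the Okada sign, and the same appeal to Theorem 3.2 delivers the conclusion.

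All substantive content is carried by Theorems 3.1 and 3.2; the remaining work is bookkeeping. The step I expect to require the most care is the sign check: making sure that the $(-1)^{\binom{k}{2}}$ produced by the row (or column) reversal and the $(-1)^{\binom{k}{2}}$ appearing in the $n=m$ branch of (\ref{ok}) genuinely cancel, and that the reflected index set $I'\subset[n]$ is exactly the one labelling the principal sub-Pfaffian that Theorem 3.2 certifies non-negative. Once this is checked against a small shape (for example $(6530)/(42)$ as in the worked example following Theorem 3.2), the general statement follows immediately.
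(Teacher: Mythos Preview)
Your proposal is correct and follows essentially the same route as the paper: both arguments derive $\det M_L(I,J)\ge 0$ by combining the $(-1)^{\binom{k}{2}}$ coming from the row (resp.\ column) reversal induced by $W$ with the $(-1)^{\binom{k}{2}}$ in the balanced case of Okada's formula (Theorem~3.1), and then invoke Theorem~3.2 for the non-negativity of the resulting principal Pfaffian of $\hat M_{\lambda/\mu}$. Your write-up is in fact more careful than the paper's about tracking the reflected row set $I'$ and the column set $J$ separately, whereas the paper collapses this bookkeeping into a single index $K$; otherwise the arguments are identical.
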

\begin{proof}
 In Theorem 3.2, it's known that for any index $K= \{ i_1, i_2, \cdots, i_k \} \subset [l (\lambda)] $, the sub-matrix  $M_{\lambda / \mu} (K)$  of size k obtained from the $ n \times m$ matrix  $M_{\lambda / \mu} $ has the determinantal relation 
\[ (-1)^{ k \choose 2} det M_{\lambda / \mu} (K) >0.\]
Also, from (\ref{mm}), we have 
\bean 
(-1)^{ k \choose 2} det M_{\lambda / \mu} (K) &=& (-1)^{ k \choose 2} det W^{-1(k)} M_L (K)= (-1)^{ k \choose 2}(-1)^{ k \choose 2} det M_L (K)  \no \\
&=&  det M_L (K)  >0.
\eean 
It proves $M_L$ is a totally non-negative matrix. For the case $M_R$, the proof is similar.
\end{proof} 
From this theorem, using skew Schur Q function, we can construct totally non-negative matrix in $(a_1, a_2, a_3, \cdots , a_m)$. For example, letting $\lambda=(9654) $ and $\mu=(321)$, we have 
\[ M_{\lambda / \mu}=  \left [\ba {ccc} Q_8 & Q_7  &  Q_6 \\ Q_5 & Q_4  &  Q_3 \\     
Q_4 & Q_3  &  Q_2 \\   Q_3 & Q_2  &  Q_1    \ea \right ].  \]
Then 
\[ M_L=WM_{\lambda / \mu}= \left [\ba {ccc}   Q_3 & Q_2  &  Q_1 \\ Q_4 & Q_3  &  Q_2 \\ Q_5 & Q_4  &  Q_3 \\ Q_8 & Q_7  &  Q_6  \ea \right ],  \]
and 
\[ M_R=M_{\lambda / \mu} W =  \left [\ba {ccc} Q_6 & Q_7  &  Q_8 \\ Q_3 & Q_4  &  Q_5 \\     
Q_2 & Q_3  &  Q_4 \\   Q_1 & Q_2  &  Q_3    \ea \right ].  \]
One can verify directly that both $M_L$ and $M_R$ are totally non-negative matrices by software. 

The $ \tau$- function define in (\ref{tc}) is simplified if one considers the totally non-negative Pfaffian in the block form (\ref{po}) and the structure of corresponding resonance of web solitons is still unknown. 

\section{ Totally non-negative  Pfaffian in factorization}
In this section, we consider the singular anti-symmetric matrix in factorization , that is, the pfaffian is zero and it has the form \cite{yt}
\be
A=S^T J S, \label{si}
\ee
where $S$ is an $ 2r \times m ( m > 2r) $ matrix and $J$ is an anti-symmetric $ 2r \times 2r$ matrix. Then $A$ is an anti-symmetric $m \times m$ matrix with rank $ \leq 2r$.  Notice that  $m$ could be odd and then $Pf(A)=0$.  
The factorization form (\ref{si}) is connected with the totally non-negative Grassmannian  \cite{ko6}.
\begin{theorem}\cite{is}\\
$ Pf(A)= \sum_{I \subset [m], |I|=2r} Pf ( J_I)  det (S_{ I}). $
 \end{theorem}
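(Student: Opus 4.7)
The identity is the classical Ishikawa--Wakayama minor-summation formula for Pfaffians, and my plan is a direct combinatorial expansion that parallels the Cauchy--Binet derivation of its determinantal analogue. The idea is to begin from the permutation definition
$$Pf(A) = \frac{1}{2^r r!} \sum_{\sigma \in S_{2r}} \text{sgn}(\sigma) \prod_{i=1}^{r} A_{\sigma(2i-1),\sigma(2i)},$$
substitute the factorization entrywise, and interchange the orders of summation. This isolates an inner sum
$$\sum_{\sigma \in S_{2r}} \text{sgn}(\sigma) \prod_i S_{\sigma(2i-1),k_i}\, S_{\sigma(2i),l_i} \;=\; \det S[\mathbf{m}],$$
where $\mathbf{m} = (k_1, l_1, \ldots, k_r, l_r)$ and $S[\mathbf{m}]$ denotes the $2r \times 2r$ matrix whose $j$-th column is column $m_j$ of $S$.

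Next I would group the remaining outer sum over tuples $\mathbf{m}$ by their underlying index set. Tuples with any repeated entry give a vanishing determinant, so only genuine $2r$-subsets $I = \{i_1 < \cdots < i_{2r}\} \subset [m]$ contribute. For each such $I$, each reordering is parametrized by some $\tau \in S_{2r}$ with $m_j = i_{\tau(j)}$, so $\det S[\mathbf{m}] = \text{sgn}(\tau)\det(S_I)$. Pulling $\det(S_I)$ outside, the residual inner sum
$$\sum_{\tau \in S_{2r}} \text{sgn}(\tau) \prod_i (J_I)_{\tau(2i-1),\tau(2i)} \;=\; 2^r r!\,Pf(J_I)$$
is exactly $2^r r!$ times the Pfaffian of the $2r \times 2r$ principal submatrix $J_I$. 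The prefactor $1/(2^r r!)$ cancels and the claimed formula drops out.

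The main obstacle is the sign bookkeeping at the regrouping step, where one must verify cleanly that $\text{sgn}(\sigma)$ (which produces the $S$-determinant over rows) decouples from $\text{sgn}(\tau)$ (which reorders the selected columns), and that contributions from repeated-index tuples vanish term by term rather than only after cancellation. A cleaner alternative that sidesteps this bookkeeping is a Berezin integral argument: write $Pf(A) = \int \exp(\tfrac12 \xi^T A \xi)\,d\xi$, perform the Grassmann linear substitution $\eta = S^T \xi$ so the quadratic form becomes $\tfrac12 \eta^T J \eta$, then expand the exponential. The pattern of which $2r$ components of $\eta$ survive in each nonzero monomial is indexed precisely by $2r$-subsets $I \subset [m]$, with $\det(S_I)$ arising as the Berezin Jacobian of the selection and $Pf(J_I)$ as the restricted quadratic form.
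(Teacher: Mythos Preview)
The paper does not supply its own proof of this statement; it is quoted from Ishikawa--Wakayama and immediately applied. So there is no in-paper argument to compare against.

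Your combinatorial expansion is the standard route to the minor-summation formula and is sound as an argument, but watch the conventions relative to the paper's setup. You open with $Pf(A)=\frac{1}{2^{r}r!}\sum_{\sigma\in S_{2r}}\cdots$, the formula for a $2r\times 2r$ Pfaffian, and your entrywise substitution $S_{\sigma(2i-1),k_i}S_{\sigma(2i),l_i}$ implicitly treats the factorization as $A=SJS^{T}$ with $J$ of size $m\times m$. The paper, however, sets $A=S^{T}JS$ with $S$ of size $2r\times m$ and $J$ of size $2r\times 2r$, so that $A$ is $m\times m$; for even $m>2r$ one has $Pf(A)=0$ trivially, and the substantive content is the principal-minor version $Pf(A_{\alpha})=\sum_{K\subset[2r],\,|K|=|\alpha|}Pf(J_{K})\det S_{K,\alpha}$ that the paper derives immediately afterward. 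Your argument proves exactly that identity once you swap which index set is the summed ``inner'' one and which is the fixed ``outer'' one. The paper's displayed theorem is itself notationally at odds with its surrounding text, which likely prompted the transposed reading; just make the roles of the $2r\times 2r$ and $m\times m$ matrices explicit when you write it up.
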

From this theorem, we see that if one considers any principal minor of $A$, then 
\be 
Pf(A_{\alpha})= \sum_{K  \subset [2r], |K|=|\alpha|} Pf ( J_K)  det (S_{ K, \alpha}),  \label{su}
\ee 
where $ |\alpha| \leq m$ and $ (S_{ K, \alpha}) $ is the matrix obtained from S by choosing the  K-rows corresponding to the index K and the $\alpha$-columns corresponding to the index $\alpha$. 

Now, we take 
\be J= antdiag[1,1,1, \cdots, 1, -1, -1, -1 , \cdots , -1]. \label{js} \ee 
There are r 1s and r (-1)s. Any principal minor of J is 0 or 1.  By (\ref{su}), one has 
\be
Pf (A_{\alpha})= 
\left \{ \ba{ll}    0 & \mbox{ if $  |\alpha| >2r $ }   \\
 \sum_{\{k_1, k_2, \cdots , k_{ |\alpha|/2} \}  \subset [r], \{ k_{ \frac{|\alpha|}{2}+1 }, \cdots , k_{|\alpha|}\}  \subset [r+1, 2r]} det (S_{ K, \alpha})   & \mbox{ if $ |\alpha| \leq 2r$ },  \label{kk}          \ea       \right. 
\ee
where $K=\{k_1, k_2, k_3, \cdots , k_{|\alpha|}\}$. The summation is over all the principal minors of J being equal to 1 , and  there are $ {r \choose |\alpha|/2}^2$  terms. \\

On the other hand, if  J has the diagonal block form 
\be  J= diag[   \left[ \ba{ll}    0 & 1   \\ -1&0   \ea       \right],   \left[ \ba{ll}    0 & 1   \\ -1&0   \ea       \right], \cdots ,        \left[ \ba{ll}    0 & 1   \\ -1&0   \ea       \right]  ],
 \label{bo} \ee
then a formula similar to (\ref{kk}) is
\be
Pf (A_{\alpha})= 
\left \{ \ba{ll}    0 & \mbox{ if $  |\alpha| >2r $ }   \\
 \sum_{K \subset \{\{1,2\}, \{3,4\}, \{5,6\}, \cdots, \{2r-1, 2r\} \},  |K|=|\alpha| }det (S_{ K, \alpha})   & \mbox{ if $ |\alpha| \leq 2r$ }.  \label{ko}          \ea       \right. 
\ee
Here there are ${r \choose |\alpha|/2}$ terms in this summation. 
\begin{theorem}
Let $J$ be defined in (\ref{js})  ( or \ref{bo}) . If $S$ is an $ 2r \times m ( m > 2r) $ matrix such that the  corresponding all even minors  in (\ref{kk}) ( or \ref{ko}) are  totally non-negative, then the matrix $A$ defined in (\ref{si}) is a totally non-negative Pfaffian. 
 \end{theorem}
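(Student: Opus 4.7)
The plan is to apply Theorem 4.1 (Ishikawa's formula) directly to each principal submatrix of $A=S^{T}JS$ and then exploit the very special shape of the two chosen $J$'s. The point is that every $2k\times 2k$ principal submatrix of $A$ is again of the same factorized form: if $\alpha\subset[m]$ with $|\alpha|=2k$, write $S_{[\alpha]}$ for the $2r\times 2k$ submatrix of $S$ obtained by keeping the columns indexed by $\alpha$. Then a direct block computation gives $A_{\alpha}=(S_{[\alpha]})^{T}J S_{[\alpha]}$, so Theorem 4.1 applies verbatim with $S$ replaced by $S_{[\alpha]}$.

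Next I would invoke the specialization already carried out in formulas (\ref{kk}) and (\ref{ko}). For $J$ as in (\ref{js}) every principal Pfaffian $Pf(J_{K})$ is either $0$ or $1$, and it equals $1$ precisely when the index set $K$ splits evenly between $[1,r]$ and $[r+1,2r]$; similarly for $J$ as in (\ref{bo}) the non-vanishing $Pf(J_{K})$ are $1$ and they correspond exactly to $K$'s built out of the blocks $\{2i-1,2i\}$. Plugging into Theorem 4.1, the Pfaffian $Pf(A_{\alpha})$ is therefore a sum (with coefficient $+1$) of precisely those $2k\times 2k$ minors $\det(S_{K,\alpha})$ of $S$ that are singled out on the right-hand sides of (\ref{kk}) and (\ref{ko}).

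By hypothesis each of these minors of $S$ is non-negative, so their sum $Pf(A_{\alpha})$ is non-negative as well. Since $\alpha$ was an arbitrary even-sized principal index set, this verifies the defining condition and shows $A$ is a totally non-negative Pfaffian. In the case $|\alpha|>2r$ the rank bound makes $Pf(A_{\alpha})=0$, which is consistent with the $|\alpha|>2r$ line in (\ref{kk}) and (\ref{ko}) and requires no further check.

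There is no genuine obstacle: the theorem is essentially a bookkeeping consequence of (\ref{kk})/(\ref{ko}) once one checks that the Pfaffian formula survives restriction to a principal submatrix, and that the principal $2k\times 2k$ minors of our two chosen $J$'s are all $0$ or $+1$ (never $-1$). The mildest subtlety to state carefully is the sign: one must verify that every non-vanishing $Pf(J_{K})$ really equals $+1$ for both choices (\ref{js}) and (\ref{bo}), so that no cancellations occur and the non-negativity of the minors of $S$ transfers intact to $Pf(A_{\alpha})$.
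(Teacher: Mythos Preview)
Your proposal is correct and takes essentially the same approach as the paper: the paper gives no separate proof of this theorem, treating it as an immediate consequence of the already-derived formulas (\ref{su}), (\ref{kk}) and (\ref{ko}), and your write-up simply spells out that reasoning explicitly (including the observation that $A_{\alpha}=(S_{[\alpha]})^{T}J S_{[\alpha]}$ and the sign check $Pf(J_{K})\in\{0,1\}$). There is nothing to add.
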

The even minors here mean $ |K|(=|\alpha| ) $ is even. In \cite{ko3, ko6}, the totally non-negative Grassmannian (TNG) is used to describe the resonant theory of multi-line solitons of KP equation. Given an $ 2r \times m ( m > 2r) $ matrix, it is a TNG if and only if its each sub-determinant of $ 2r \times 2r$ matrix is non-negative. The condition of  some non-negative even minors is stronger than TNG but weaker than totally non-negative matrix. 

Let's consider $r=1$ and $m \geq 3$. Suppose  
\be 
S=  \left[ \ba{llllllll}   s_{11}    &  s_{12} & s_{13} & s_{14} & s_{15} & s_{16 } & \cdots &  s_{1m}\\
  s_{21}  & s_{22}  & s_{23} & s_{24} & s_{25} & s_{26 } & \cdots &  s_{2m}   \ea       \right] , \qd   
	J= \left[ \ba{ll}    0 & 1 \\ -1 & 0   \ea       \right],      \label{to} 
\ee
 To obtain TNG, the sub-determinants of $S$ will be 
\be  \Delta_{ij}= det \left[ \ba{ll}   s_{ 1i}  &  s_{ 1j} \\   s_{ 2i }&  s_{ 2j} \ea       \right]= s_{ 1i}s_{ 2j}-s_{ 1j}s_{ 2i}= - \Delta_{ji} \geq 0  \label{sb} \ee
with the Plucker's relations
\be \Delta_{ik}\Delta_{jl}= \Delta_{ij}\Delta_{kl}+ \Delta_{il}\Delta_{jk}, \label{pu} \ee
where $ 1 \leq i \leq j \leq m$ .  
We remark that the conditions (\ref{sb}) with constraints (\ref{pu}) are under-determined and  their  solutions can be described  by the Le-diagrams defined on Young diagrams \cite{po} ( also  see \cite{ko6}).

A simple calculation obtains  that the anti-symmetric $ m \times m $ matrix A defined in  (\ref{si}) is
\be  A=[\Delta_{ij}]. \label{td} \ee
From (\ref{kk}) and (\ref{td}), the $\tau$-function for BKP is 
\be 
\tau_A =1+ \sum_{1 \leq i<  j \leq m} \frac{(p_i-p_j)}{2(p_i+p_j)} \Delta_{ij} e^{x(p_i+p_j) +y (p_i^3+p_j^3) + t( p_i^5+p_j^5 )},
 \label{Ta} 
\ee
where $ p_1^2  > p_2^2 > p_3^2> \cdots > p_{2n}^2$. 
For example,  one considers the TNG of  T-type soliton in KP theory \cite{ko6}
\be 
S=  \left[ \ba{llll}   1   &  0 & -a & -b   \\
  0  & 1 & c &  d  \ea       \right],   \label{td}
\ee
where $ a,b c, d$ are positive numbers and $ bc-ad >0$. Then 
\be A=  \left[ \ba{llll}   0  &  1 & c & d   \\
  -1  & 0 & a &  b  \\
	-c & -a & 0& bc-ad \\
	-d & -b & ad-bc &0  \ea       \right]= \left[ \ba{llll}   0  &  A_{12} & A_{13} &   A_{14} \\
  -A_{12}  & 0 & A_{23} &  A_{24}  \\
	-A_{13} &-A_{23} & 0& A_{34} \\
	-A_{14} & -A_{24} & -A_{34} &0  \ea       \right].   \label{tt}  \ee
Notice that $Pf(A)=0$. Then the $\tau$-function is 
\bea 
\tau_A &=&  1+ \frac{p_1-p_2}{2 (p_1+p_2)} e^{x(p_1+p_2) +y (p_1^3+p_1^3) + t( p_1^5+p_2^5 )}+A_{13} \frac{p_1-p_3}{2 (p_1+p_2)} e^{x(p_1+p_3) +y (p_1^3+p_3^3) + t( p_1^5+p_3^5 )} \no \\
&+& A_{14} \frac{p_1-p_4}{2 (p_1+p_4)} e^{x(p_1+p_4) +y (p_1^3+p_4^3) + t( p_1^5+p_4^5 )}+ A_{23}\frac{p_2-p_3}{2 (p_2+p_3)} e^{x(p_2+p_3) +y (p_2^3+p_3^3) + t( p_2^5+p_3^5 )} \no \\
&+ &  A_{24}\frac{p_2-p_4}{2 (p_2+p_4)} e^{x(p_2+p_4) +y (p_2^3+p_4^3)t( p_2^5+p_4^5 )}  +   A_{34}   \frac{p_3-p_4}{2 (p_3+p_4)} e^{x(p_3+p_4) +y (p_3^3+p_4^3) + t( p_3^5+p_4^5 )},   \no \\
\label{tu}
\eea 
where $ p_1^2  > p_2^2 > p_3^2> p_{4}^2$. Please see the figure 3.

\section{Concluding Remarks}

The $\tau$-function of BKP has a Pfaffian structure and the coefficients can be expressed by skew Schur's Q functions. To obtain non-singular solitons, one investigates the totally non-negative Pfaffian and studies two special types : block type  and factorization type. Using Cauchy-Bitnet formula, the  totally non-negative Pfaffian is preserved under the multiplication of a totally non-negative matrix. In \cite{ko3, ko6}, the resonant structure of KP equation is investigated by the Le-Diagram \cite{po} using the totally non-negative Grassmannian. It's known that the totally non-negative matrix can also be described the Le-Diagram \cite{po}. We hope the resonant structure of BKP could be studied in a similar way when  $ t \to \pm \infty $. Furthermore, given a  totally non-negative Grassmannian \cite{ko6}, one could possibly construct a totally non-negative Pfaffian using (\ref{kk}) or (\ref{ko}). These need further investigations.
\subsection*{Acknowledgments}The author thanks Prof. M. Ken-ichi for his introducing the reference \cite{yt} and fruitful discussions. This work is supported in part by the National Science and Technology Council of Taiwan under
Grant No. NSC 113-2115-M-606-001.

\begin{figure}[t]
	\centering
		\includegraphics[width=0.88\textwidth]{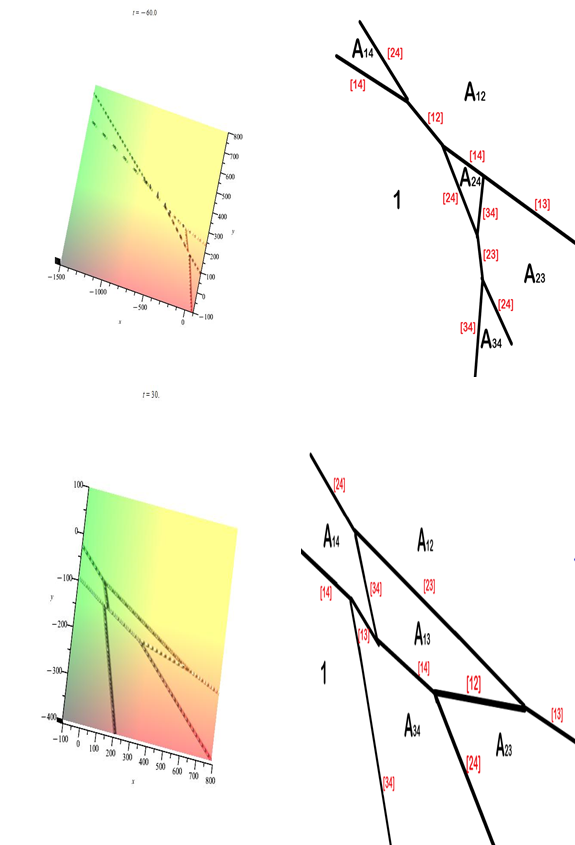}
	\caption{ $p_1=2, p_2=1.5, p_3=0.8, p_4=0.1, a=5, b=7, c=3, d=3. $ The soliton graphs  correspond to $ t <<0$ (top panels) and $ t>> 0$ (bottom panels).  When $ |y| \to \infty$,  the unbounded line solitons are invariant and we notice the  triangle and quadrilateral due to the resonances.  The quadrilateral also appears for the T-type soliton in KP theory; however, there is no such a triangle in KP theory. }
\end{figure}

\end{document}